\newcommand{\todo}[1][\null]{\ensuremath{\clubsuit}}
\newcommand{\noprint}[1]{}
\newcommand{\sign}{\mathop{\rm sgn}\nolimits}
\newcounter{mcasenum}
\newtheorem{theorem}{Theorem}
\newtheorem*{proposition*}{Proposition}
{\theoremstyle{definition}

}
\begin{document}
\allowdisplaybreaks

\begin{center}{\Large{\bf
Classification of reduction operators and exact solutions \\[1ex]of variable coefficient Newell--Whitehead--Segel equations}}

{\vspace{4mm}\par\noindent\large Olena Vaneeva$^\dag$, Vyacheslav Boyko$^\dag$, Alexander Zhalij$^\dag$ and Christodoulos Sophocleous$^\ddag$}
\end{center}

\par\noindent{$^{\dag}${\it
 Institute of Mathematics of NAS of Ukraine,
 3 Tereshchenkivs'ka Str., 01004 Kyiv, Ukraine }\\
$^{\ddag}${\it
 Department of Mathematics and Statistics, University of Cyprus, Nicosia CY 1678, Cyprus}
 \\[1ex]
\centerline{\href{mailto:vaneeva@imath.kiev.ua}{vaneeva@imath.kiev.ua},\quad \href{mailto:boyko@imath.kiev.ua}{boyko@imath.kiev.ua},\quad \href{mailto:zhaliy@imath.kiev.ua}{zhaliy@imath.kiev.ua},\quad\href{mailto:christod@ucy.ac.cy}{christod@ucy.ac.cy}}}

{\vspace{6mm}\par\noindent\hspace*{8mm}\parbox{146mm}{\small
A class of the Newell--Whitehead--Segel equations (also known as generalized Fisher equations and Newell--Whitehead equations) is studied with Lie and ``nonclassical'' symmetry points of view. The classifications of Lie reduction operators and of regular nonclassical reduction operators are performed. The set of admissible transformations (the equivalence groupoid) of the class is described exhaustively. The criterion of reducibility of variable coefficient Newell--Whitehead--Segel equations to their constant coefficient counterparts is derived. Wide families
of exact solutions for such variable coefficient equations are constructed.

}\par}

\section{Introduction}
A reduction operator of a $(1{+}1)$-dimensional partial differential equation (PDE) with independent variables $t$ and $x$ and the dependent variable $u$ is a differential operator of the form $
Q = \tau(t,x,u)\partial_t + \xi(t,x,u)\partial_x + \eta(t,x,u)\partial_u$,
$(\tau,\xi)\ne(0,0)$,
such that the corresponding invariant surface condition $Q[u]:=\tau u_t+\xi u_x-\eta=0$ leads to the construction of ansatz that reduces the number of independent variables of the respective equation by one.
Thus, such operators allows one to reduce a $(1{+}1)$-dimensional PDE to an ordinary differential equation.

\looseness=-1 The reduction method is an efficient tool for seeking exact solutions of nonlinear PDEs as the general theory of integration of such equations does not exist. Among the most known reduction techniques are the prominent Lie reduction method that originates from works by S.~Lie and the nonclassical reduction method suggested by G.W. Bluman in~\cite{Bluman1968} (see also~\cite{Bluman&Cole1969}). The criterion of ``nonclassical'' invariance was firstly formulated in~\cite{Fushchych&Tsyfra1987} and the rigorous theory of the nonclassical reduction method, \emph{theory of reduction modules}, was recently developed in~\cite{Boyko}. The nonclassical reduction operators are also called {\it nonclassical symmetries}~\cite{OlverRosenau1987}, {\it conditional symmetries}~\cite{levi1989} and {\it$Q$-conditional symmetries}~\cite{FushchichShtelenSerov-book} (see the related discussion in~\cite{KunzingerPopovych2009} and some more research papers of interest~\cite{Fushchych92,Fushchych95,GrundlandTafel1995,Nucci,OlverVorobev1996,yehorchenko}).

There is also a direct reduction method based on substitution of ansatz into a PDE in question~\cite{ClarksonKruskal1989, Fushchych81}.
A rigorous definition of reduction of PDEs was presented in~\cite{ZhdanovTsyfraPopovych1998}.
It was proved therein that the direct approach of reduction, taken in its
full generality, is equivalent to the non-classical (conditional symmetry) approach. The enhanced proof can be found in~\cite{Boyko}.

Therefore an important problem arises: to classify reduction operators for those classes of PDEs that are of interest for applications.
Classification of Lie reduction operators is known as {\it group classification problem} and appears to be the central problem of the group analysis. The main benefit of Lie method is that the determining system for finding coefficients of Lie reduction operators consists of linear PDEs. That is why the construction of Lie symmetry operators for a fixed PDE is a routine task usually which can be performed using the packages of symbolic computations.
See, for example, the Maple-based GEM package~\cite{Cheviakov,Cheviakov2010}. Unfortunately the group classification problems can be solved automatically using symbolic computations only for certain classes having simple structures. The majority of cases requires usage of the modern techniques of the group analysis such as mapping between classes of PDEs, gauging of arbitrary elements of the class in order to reduce their number, application of various types of equivalence groups, etc.\ (see, e.g.,~\cite{Popovych&Ivanova2004NVCDCEs,VJPS2007,VPS2009}).

The nonclassical reduction operators can be of regular and singular types. The problem of finding singular reduction operators reduces to solving an initial PDE, therefore this case is called the ``no-go'' case and often omitted in consideration (see more about ``no-go'' case in~\cite{Boyko,Fushchych&Shtelen&Serov&Popovych1992,KunzingerPopovych,Popovych2006b,Popovych2008,Zhdanov&Lahno1998}). But even in the case of regular nonclassical reduction operators
the problem of their classification for classes of PDEs is difficult. This is due to the fact that finding coefficients of nonclassical reduction operators one requires to solve a system of nonlinear PDEs.
That is why this method more often results in the complete solution when applied to a~fixed PDE rather than to a class of PDEs. Indeed, there are quite few examples of successful classification of nonclassical reduction operators (even regular ones) in the literature. At the best of our knowledge, such classifications are performed for the class of
semilinear diffusion equations with a source $u_t=u_{xx}+f(u)$~\cite{ArrigoHillBroadbridge1993,Clarkson&Mansfield1993,FushchichSerov1990}, the class of nonlinear reaction--diffusion equations $u_t=(D(u)u_x)_x+f(u)$ for the cases of exponential and power low diffusivity~\cite{ArrigoHill1995}, the class of nonlinear filtration equations $u_t=f(u_x)u_{xx}$~\cite{Popovych&VaneevaIvaanova2007}, the class of variable coefficient Huxley equations
$u_t=u_{xx}+k(x)u^2(1-u)$~\cite{Bradshaw2007, Ivanova&Sophocleous2010}, and the class of generalized Burgers equations $u_t=uu_x +f(t,x)u_{xx}$~\cite{PocheketaPopovych2017}.

We aim to perform exhaustive classifications of Lie and regular nonclassical reduction operators
 for the class of equations of the form
\begin{equation}\label{eq_fish}
u_t=a^2(t)u_{xx}+b(t)u-c(t)u^3,
\end{equation}
where $a(t)$, $b(t)$ and $c(t)$ are arbitrary smooth functions, $a(t)$ and $c(t)$ are nonvanishing.
This is a class of variable coefficient Newell--Whitehead--Segel equations called also in the literature generalized Fisher equations and Newell--Whitehead equations.

The classical Newell--Whitehead--Segel equation, $u_t=u_{xx}+u-u^3$, was derived in~\cite{NewellWhitehead,Segel} and it is particular case of generalized Fisher equations
\begin{equation}\label{eq_fish_gen}
u_t=\big(u^mu_x\big)_x+u^p\big(1-u^q\big),
\end{equation}
which appear as insect and animal dispersal and
invasion models in the mathematical biology (cf.\ equation~(13.40) in~\cite{murray2002}).
Here $t$ and $x$ are time and spatial coordinates, respectively, $u$ is a~population density, $p$, $q$ and $m$ are positive parameters. There are also a number of models with $m=0$, which correspond to the case of density-independent diffusion. If $m=0$ and $p=q=1$, then equation~\eqref{eq_fish_gen} becomes classical Fisher equation that was originally derived in~\cite{Fisher1937} to model the propagation of a gene in a population.
Later it was proposed to consider generalized Fisher equations
with time-dependent diffusion coefficients $u_t=f(t)u_{xx}+g(t)u(1-u)$.
In practice these coefficients could represent long term changes in climate or short term seasonality~\cite{Hammond&Bortz2011, OgunKart2007}. The group classification of the latter class was carried out in~\cite{VPS2013}.

The equations~\eqref{eq_fish} with $b(t)=c(t)=1$ were studied in~\cite{OgunKart2007} using the truncated Painlev\'e expansion method in order to construct their exact solutions. Having the same goal
the whole class~\eqref{eq_fish} was considered recently in~\cite{triki}.
It appears that all the found in~\cite{triki} ``solutions'' are stationary ones and moreover do not satisfy the respective equations due to wrong signs of constants appearing therein.
We aim to construct exact solutions for equations~\eqref{eq_fish} and also to present the complete classifications of not only Lie reduction operators but also regular nonclassical ones.
We note that the group classification for the general class of (1+1)-dimensional second-order quasilinear evolution equations $u_t=F(t,x,u,u_x)u_{xx}+G(t,x,u,u_x),$ $F\neq0,$ that contains class~\eqref{eq_fish} as subclass was
performed in~\cite{Basarab2001}. Nevertheless those results obtained up to a very wide equivalence group seem to be
inconvenient to derive group classification for class~\eqref{eq_fish}.

The structure of the paper is as follows. In Section~\ref{section2} we study the transformational properties of class~(\ref{eq_fish}) in order to reduce the number of its arbitrary elements by point transformations. The criterion of reducibility of variable coefficient equations from class~(\ref{eq_fish}) to constant coefficient equations from the same class is also derived therein. Classifications of Lie and nonclassical reduction operators are carried out in Sections~\ref{section3} and~\ref{section4}, respectively. Section~\ref{section5} is devoted to the construction of exact solutions of variable coefficients Newell--Whitehead--Segel equations using the equivalence transformations.

\section{Equivalence groupoid}\label{section2}

Point transformations can essentially simplify the classification problems for classes of differential equations. So
we aim to describe firstly all point transformations each of which connects a~pair of equations from the class~\eqref{eq_fish}.
Such transformations are called form-preserving~\cite{Kingston&Sophocleous1998} or admissible~\cite{Popovych2006} or allowed transformations~\cite{Winternitz1992}. The classifications for such transformations for various classes of PDEs were carried out, in particular, in~\cite{GagnonWinternitz1993,GungorWinternitz2004,Kingston1991,OpanasdenkoBihloPopovych2017,Popovych&Kunzinger&Eshraghi2010,VJPS2007,VPS2009,VaneevaPosta2017}.
Ordered triplets, consisting of the initial and target equations and the transformations linking them, together with the operation of composition of transformations have the groupoid structure. Such a groupoid is called {\it equivalence groupoid}~\cite{PopovychBihlo}.

Using the direct method we deduce that the equivalence groupoid of class~\eqref{eq_fish} is generated
by the usual {\it equivalence transformations} from this class. These are nondegenerate point transformations which preserve the form of any equation from a given class transforming only the form of its arbitrary element(s)~\cite[pp.~64--66]{Ovsiannikov1982}.
Therefore, class~\eqref{eq_fish} is normalized (see \cite{Popovych2006,
Popovych&Kunzinger&Eshraghi2010} for related definitions).
The following statement is true.
\begin{theorem}\label{theorem:equivalence-group}
The equivalence group~$G^\sim$ of class~\eqref{eq_fish} is formed by the transformations
\begin{gather}\nonumber
\tilde t=\theta(t),\quad \tilde x=\delta_1x+\delta_2, \quad
\tilde u=\varphi(t)u, \\\label{eq_theor}
\tilde a^2(\tilde t)=\dfrac{\delta_1{}^2}{\theta_t}a^2(t),\quad
\tilde b(\tilde t)=\dfrac{1}{\varphi\theta_t}(\varphi b(t)+\varphi_t), \quad
\tilde c(\tilde t)=\dfrac{1}{\varphi^2\theta_t}c(t),\quad
\end{gather}
where $\delta_1$ and $\delta_2$ are arbitrary constants with $\delta_1\not=0$,
and the functions $\theta(t)$ and $\varphi(t)$ are arbitrary smooth functions with $\theta_t\varphi\not=0$.

These transformations generate the equivalence groupoid of class~\eqref{eq_fish}.
\end{theorem}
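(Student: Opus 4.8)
The plan is to apply the direct method: take an arbitrary nondegenerate point transformation
$\tilde t=T(t,x,u)$, $\tilde x=X(t,x,u)$, $\tilde u=U(t,x,u)$ connecting a source equation of the form~\eqref{eq_fish} with arbitrary elements $(a,b,c)$ to a target equation of the same form with arbitrary elements $(\tilde a,\tilde b,\tilde c)$, and to show that the only such transformations are exactly those listed in~\eqref{eq_theor}. Since these transformations constitute the equivalence group $G^\sim$ for arbitrary parameters $\theta,\varphi,\delta_1,\delta_2$, establishing that \emph{every} admissible transformation has this form proves both assertions at once: it identifies $G^\sim$ and shows that the equivalence groupoid is generated by $G^\sim$, i.e.\ that the class~\eqref{eq_fish} is normalized.

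First I would prolong the transformation to second order, expressing $\tilde u_{\tilde t}$, $\tilde u_{\tilde x}$ and $\tilde u_{\tilde x\tilde x}$ through the total derivatives $D_t,D_x$ of $T,X,U$ (via Cramer's rule applied to the two chain-rule relations for $D_t\tilde u$ and $D_x\tilde u$), substitute into the target equation, and use the source equation to replace $u_t$ by $a^2u_{xx}+bu-cu^3$. This turns the invariance requirement into a single identity that must hold for all values of the jet variables $u_x,u_{xx}$ treated as independent. The crucial structural step — and the main obstacle — is to deduce from this identity the \emph{fibered} form of the transformation. Because both equations are second-order evolution equations whose coefficient of $u_{xx}$ depends on $t$ alone and which contain neither mixed derivatives nor products of derivatives, a careful analysis of the top-order terms in the jet variables forces $T_x=T_u=0$ and $X_u=0$, so that $\tilde t=T(t)$ and $\tilde x=X(t,x)$, and also $U_{uu}=0$, so that $U$ is affine in $u$.

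With this triangular structure in hand, the remainder is bookkeeping. Writing $U=\varphi(t,x)u+\psi(t,x)$ and splitting the identity according to the monomials $u_{xx}$, $u_x$, $u^3$, $u^2$, $u$ and the free term, the $u^2$ and constant terms give $\psi=0$. The coefficient of $u_{xx}$ yields $\tilde a^2=X_x^{\,2}a^2/T_t$ and that of $u^3$ yields $\tilde c=c/(T_t\varphi^2)$; since $\tilde a^2$ and $\tilde c$ must again be functions of $\tilde t=T(t)$ alone, these relations force $X_x$ and $\varphi$ to be independent of $x$, so $X_{xx}=0$ and $\varphi_x=0$. The coefficient of $u_x$ then reduces to a relation proportional to $\varphi X_t/X_x$, which together with $\varphi_x=0$ gives $X_t=0$; hence $X=\delta_1x+\delta_2$ with constants $\delta_1\ne0$, $\delta_2$, and $\varphi=\varphi(t)$. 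Finally the coefficient of $u$ yields $\tilde b=(\varphi_t+\varphi b)/(T_t\varphi)$. Setting $\theta:=T$, the four relations collapse precisely to~\eqref{eq_theor}, with $\theta_t\varphi\ne0$ from nondegeneracy.

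To finish I would record the converse and the group conclusion: a direct check shows that each transformation~\eqref{eq_theor} maps~\eqref{eq_fish} into an equation of the same form, so these transformations form a group under composition; since they exhaust the admissible transformations computed above, the equivalence groupoid of~\eqref{eq_fish} is generated by $G^\sim$ and the class is normalized. I expect the genuinely delicate point to be the rigorous justification of the fibered structure in jet space — eliminating the $u$-dependence of $T$ and $X$ and the nonlinear-in-$u$ part of $U$ — whereas the remaining coefficient comparisons are routine.
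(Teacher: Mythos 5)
Your proposal is correct, but it takes a different route from the paper's own proof. The paper does not carry out the direct method from scratch: it invokes the normalization result of \cite[Theorem~1]{VPS2014} for the superclass of evolution equations $u_t=F(t)u_n+G(t,x,u_0,\dots,u_{n-1})$ with $G_{u_iu_{n-1}}=0$, which already supplies the fibered structure $\tilde t=T(t)$, $\tilde x=X^1(t)x+X^0(t)$, $\tilde u=U^1(t,x)u+U^0(t,x)$ \emph{and} the affineness of $\tilde x$ in $x$; the proof then only substitutes $F=a^2(t)$, $G=b(t)u-c(t)u^3$ into the pre-derived determining equations and splits with respect to $u_x$ and powers of $u$ (coefficient of $u^2$ gives $U^0=0$, of $u^3$ gives $U^1_x=0$ and the $\tilde c$ relation, of $u_x$ gives $X^1_t=X^0_t=0$, of $u$ gives the $\tilde b$ relation). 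You instead propose to re-derive the structural reductions $T_x=T_u=0$, $X_u=0$, $U_{uu}=0$ from the top-order jet analysis — which is exactly the content of the form-preserving-transformation arguments of \cite{Kingston&Sophocleous1998} and is indeed achievable as you outline (the $u_{xx}$ coefficient kills $X_u$, the $u_x^2$ coefficient kills $U_{uu}$) — and you recover affineness of $X$ in $x$ not from a structural theorem but from the constraint that $\tilde a^2=X_x^2a^2/T_t$ must depend on $\tilde t$ alone, with the analogous constraint on $\tilde c$ forcing $\varphi_x=0$. The residual bookkeeping in both arguments is essentially identical. What each approach buys: the paper's citation-based proof is short and inherits the subgroupoid property of the subclass automatically from normalization of the superclass; yours is self-contained and makes the delicate jet-space step explicit rather than outsourcing it, at the cost of having to execute that step rigorously — you correctly identify it as the main burden, and your sketch of how it goes through is sound.
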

\begin{proof}It is proven in~\cite[Theorem~1]{VPS2014} that
 the class of $n$th-order evolution equations of the~form
\begin{gather*}
u_t=F(t)u_n+G(t,x,u_0,u_1,\dots,u_{n-1}),\quad F\ne0,\quad G_{u_iu_{n-1}}=0,\ i=1,\dots,n-1,
\end{gather*}
where $n\geqslant 2$, is normalized in the usual sense. The components for $t$, $x$, and $u$ of
the transformations which constitute its usual point equivalence group have the form
\begin{gather*}
\tilde t=T(t),\quad \tilde x=X^1(t)x+X^0(t),\quad \tilde u=U^1(t,x)u+U^0(t,x),
\end{gather*}
where $T=T(t)$, $X=X^i(t)$, and $U^i=U^i(t,x)$, $i=0,1$,
are arbitrary smooth functions of their arguments and $T_tX^1U^1\neq0$.
Since this class is normalized, equivalence groupoid of any of its subclass forms a subgroupoid in the equivalence groupoid of the whole class. We substitute $n=2$, $F=a^2(t)$ and $G=b(t)u-c(t)u^3$ into equations (5) in
\cite[Theorem~1]{VPS2014}. The resulting equations are $\tilde a^2(\tilde t)={{(X^1)}^2}a^2(t)/{\theta_t}$ and
\begin{gather*}\tilde c(\tilde t)T_t\left(U^1u+U^0\right)^3-\tilde b(\tilde t)T_t\left(U^1u+U^0\right)+U^1\left(b(t)u-c(t)u^3\right)+U^1_tu+U^0_t\\
\quad{} -\left(U^1_{xx}u+2U^1_xu_x+U^0_{xx}\right)a^2(t)-\frac{X^1_tx+X^0_t}{X^1}\left(U^1_xu+U^1u_x+U^0_x\right)=0.
\end{gather*}
We split the latter equation
with respect to $u_x$ and $u$. This provides a system of five determining equations. The coefficient of $u^2$ implies that $U^0(t,x)=0$. From this results, we note that the coefficient independent of $u$ and $u_x$ also vanishes. From the coefficient of $u^3$ we have $U^1_x=0$, so $U^1=U^1(t)$, and the relation between $\tilde c$ with $c$: $\tilde c(\tilde t)(U^1)^2T_t=c(t)$. Then the coefficient of $u_x$ results in $X^1_t=X^0_t=0$, which means $X^1=\delta_1$ and $X^0=\delta_2$ are arbitrary constants with $\delta_1\neq0$. The remaining equation, resulting from the coefficient of $u$, gives the relation $\tilde b$ and $b$, namely, $\tilde b(\tilde t)U^1T_t=U^1 b(t)+U^1_t$.
Providing the notations $U^1=\varphi(t)$, and $T=\theta(t)$ we get the statement of the theorem.
\end{proof}

Using Theorem~\ref{theorem:equivalence-group}, we can find the conditions for arbitrary elements $a$, $b$, and $c$, for which variable coefficient Newell--Whitehead--Segel equations are reducible to constant coefficient equations from the same class by point transformations. To derive such a condition we set
 $\tilde a$, $\tilde b$ and $\tilde c$ to be constants in the formulas~\eqref{eq_theor} and find compatibility condition for the obtained system. This results in the statement.

\begin{theorem}\label{theorem2}
A variable-coefficient equation from class~\eqref{eq_fish} is reduced to a constant-coefficient equation from the same class by a point transformation
if and only if for some constant~$\lambda$
the corresponding coefficients $a(t)$, $b(t)$ and $c(t)$ satisfy the condition
\begin{equation}\label{criterion1}
\frac b{a^2}+\frac12\frac{\big(c/a^2\big)_t}{c}=\lambda.
\end{equation}
\end{theorem}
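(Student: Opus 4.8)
The plan is to exploit the normalization of the class established in Theorem~\ref{theorem:equivalence-group}: since class~\eqref{eq_fish} is normalized, every point transformation mapping one equation of the class to another is induced by an element of the equivalence group~$G^\sim$. Hence reducibility of a variable-coefficient equation to a constant-coefficient one is equivalent to the existence of constants $\delta_1\neq0$, $\delta_2$ and functions $\theta$, $\varphi$ (with $\theta_t\varphi\neq0$) in~\eqref{eq_theor} for which the transformed coefficients $\tilde a$, $\tilde b$, $\tilde c$ are simultaneously constant. I would therefore set $\tilde a=\alpha$, $\tilde b=\beta$, $\tilde c=\gamma$ with $\alpha,\beta,\gamma$ constants ($\alpha,\gamma\neq0$) directly in the relations~\eqref{eq_theor} and analyse the resulting system for the unknowns $\theta$ and $\varphi$.

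First I would read off $\theta$ and $\varphi$ from two of the three relations. The relation for $\tilde a$ forces $\theta_t=(\delta_1^2/\alpha^2)a^2(t)$, which fixes $\theta$ up to a constant of integration by a single quadrature; since $a^2>0$ the condition $\theta_t\neq0$ holds automatically. The relation for $\tilde c$ gives $\varphi^2=(\alpha^2/(\gamma\delta_1^2))\,c/a^2$, so that $\varphi$ is determined up to sign as a square root; choosing the sign of $\gamma$ to match the constant sign of the nonvanishing ratio $c/a^2$ guarantees that $\varphi$ is real and nonzero. Thus the first and third relations in~\eqref{eq_theor} can always be satisfied, and they impose no constraint on the coefficients beyond the standing nonvanishing assumptions.

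The whole content of the theorem then sits in the remaining relation for $\tilde b$. I would rewrite it as $\beta\theta_t=b+\varphi_t/\varphi$ and compute $\varphi_t/\varphi$ by logarithmic differentiation of $\varphi^2\propto c/a^2$, obtaining $\varphi_t/\varphi=\tfrac12(c/a^2)_t/(c/a^2)$. Substituting the expression for $\theta_t$ and dividing through by $a^2$ (using $a^2\cdot c/a^2=c$) collapses the relation to
\begin{equation*}
\frac{\beta\delta_1^2}{\alpha^2}=\frac{b}{a^2}+\frac12\frac{(c/a^2)_t}{c}.
\end{equation*}
The left-hand side is a constant, while the right-hand side is precisely the expression in~\eqref{criterion1}. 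This yields both implications at once: necessity, because existence of the transformation forces the right-hand side to equal the constant $\lambda:=\beta\delta_1^2/\alpha^2$; and sufficiency, because if~\eqref{criterion1} holds for some constant $\lambda$, one may pick $\alpha,\delta_1,\gamma$ arbitrarily (subject only to the sign of $\gamma$), recover $\theta$ and $\varphi$ by the quadrature and the square root above, and set $\beta=\lambda\alpha^2/\delta_1^2$, after which~\eqref{eq_theor} produces exactly $\tilde a=\alpha$, $\tilde b=\beta$, $\tilde c=\gamma$.

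I do not anticipate a genuinely hard step; the computation is elementary once the problem is reduced to $G^\sim$ by normalization. The only points requiring care are bookkeeping ones: ensuring $\varphi$ is real (handled by the sign of $\gamma$) and confirming that no hidden compatibility condition is lost when separating the roles of the three relations. The essential structural observation is that constancy of $\tilde a$ and $\tilde c$ is \emph{free}, so that the single genuine obstruction is the constancy of the combination appearing in~\eqref{criterion1}.
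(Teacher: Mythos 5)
Your proposal is correct and takes essentially the same approach as the paper: the paper also reduces the problem to the equivalence group via normalization (Theorem~\ref{theorem:equivalence-group}), sets $\tilde a$, $\tilde b$, $\tilde c$ to constants in~\eqref{eq_theor}, and extracts the compatibility condition, which is exactly the computation you carry out. The paper states this only as a two-sentence sketch, whereas you supply the details (the quadrature for $\theta$, the square root for $\varphi$, the sign bookkeeping for $\gamma$), arriving at the same criterion~\eqref{criterion1} with $\lambda=\beta\delta_1^{\,2}/\alpha^2$.
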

The criterion~\eqref{criterion1} is rather useful for checking whether a given Newell--Whithead--Segel equation with time-dependent coefficients is similar to a constant coefficient equation from the same class.
In~\cite{triki} ``solutions'' were found for equations~\eqref{eq_fish} with
 $b(t)=c_1k^2a^2(t)$ and $c(t)=c_2k^2a^2(t)$, where $c_1$, $c_2$ and $k$ are constants. It is easy to see that for such
 values of $b(t)$ and $c(t)$ the condition~\eqref{criterion1} is satisfied.
 In Section~\ref{section5} we show how to get wide families of non-stationary solutions for the subclass of equations, whose coefficients satisfy~\eqref{criterion1} using the equivalence method.

Equivalence transformations allow us to simplify the initial class essentially. The arbitrary element $b(t)$ can be set to zero whereas $a(t)$ to a nonzero constant, for example, to one. Indeed, the transformation
\begin{equation}\label{tr}
\textstyle \tilde t =\int a^2(t){\rm d}t,\quad\tilde x=x,\quad\tilde u={\rm e}^{-\int b(t){\rm d} t}u
\end{equation}
maps class~\eqref{eq_fish} to its subclass
\begin{equation}\label{eq_NWS}
u_t=u_{xx}-c(t)u^3.
\end{equation}
The tildes in the latter equation are omitted.

Admissible transformations of the class~\eqref{eq_NWS} can be easily derived from Theorem~\ref{theorem:equivalence-group}, where we set $\tilde a^2=a^2=1$ and $\tilde b=b=0$. It guarantees the complete result since superclass~\eqref{eq_fish} of class~\eqref{eq_NWS}, is normalized. The result is summarized in the following statement.
\begin{theorem}
Class~\eqref{eq_NWS} is normalized. The equivalence groupoid of~\eqref{eq_NWS} is generated by transformations which form its usual equivalence group $G^\sim_1\colon$
\[
\tilde t=\delta_1{}^2t+\delta_0,\quad \tilde x=\delta_1x+\delta_2, \quad
\tilde u=\delta_3u, \quad
\tilde c(\tilde t)=\frac1{\delta_1{}^2\delta_3{}^2}c(t),\quad
\]
where $\delta_i$, $i=0,1,2,3$, are arbitrary constants with $\delta_1\delta_3\not=0$.
\end{theorem}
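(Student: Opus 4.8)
The plan is to avoid re-running the direct method from scratch and instead exploit the normalization of the superclass~\eqref{eq_fish} established in Theorem~\ref{theorem:equivalence-group}. Every equation of class~\eqref{eq_NWS} is an equation of class~\eqref{eq_fish} singled out by the conditions $a^2=1$ and $b=0$. Since class~\eqref{eq_fish} is normalized in the usual sense, any point transformation connecting two of its equations belongs to the equivalence group~$G^\sim$ described by~\eqref{eq_theor}. In particular, any admissible transformation of~\eqref{eq_NWS}, linking two equations that a fortiori lie in~\eqref{eq_fish}, must be the restriction of some element of~$G^\sim$. Hence it suffices to extract from~$G^\sim$ exactly those transformations that map a source equation with $a^2=1$, $b=0$ to a target equation with $\tilde a^2=1$, $\tilde b=0$.

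First I would impose these source and target conditions on the relations~\eqref{eq_theor}. The equation $\tilde a^2=\delta_1{}^2a^2/\theta_t$ with $\tilde a^2=a^2=1$ forces $\theta_t=\delta_1{}^2$, whence $\tilde t=\delta_1{}^2t+\delta_0$ for some constant~$\delta_0$. The equation $\tilde b=(\varphi b+\varphi_t)/(\varphi\theta_t)$ with $\tilde b=b=0$ gives $\varphi_t=0$, so $\varphi$ is a nonzero constant, which I rename $\delta_3$. Substituting $\theta_t=\delta_1{}^2$ and $\varphi=\delta_3$ into the remaining formulas of~\eqref{eq_theor} yields $\tilde x=\delta_1x+\delta_2$, $\tilde u=\delta_3u$ and $\tilde c(\tilde t)=c(t)/(\delta_1{}^2\delta_3{}^2)$, with $\delta_1\delta_3\ne0$. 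This is precisely the family~$G^\sim_1$ claimed in the statement.

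To finish I would check that $G^\sim_1$ is a group: composing two such transformations multiplies the parameters $\delta_1$ and $\delta_3$ and adds the shifts in the obvious way, and inverses are of the same form, so the set is closed under composition and inversion and the arbitrary element $c$ is transformed within the class. Because $G^\sim_1$ is exactly the restriction of the equivalence group of the normalized superclass to the subclass~\eqref{eq_NWS}, and normalization of~\eqref{eq_fish} guarantees that no admissible transformation of~\eqref{eq_NWS} falls outside this restriction, the transformations of~$G^\sim_1$ generate the entire equivalence groupoid of~\eqref{eq_NWS}; thus class~\eqref{eq_NWS} is itself normalized. The only genuinely delicate point is this last completeness step: one must invoke normalization of the superclass to be certain that restricting $G^\sim$ captures \emph{all} admissible transformations of the subclass, rather than merely a subfamily of them. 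Everything else is a short substitution, which is why the argument reduces to selecting the parameters compatible with $a^2=\tilde a^2=1$ and $b=\tilde b=0$.
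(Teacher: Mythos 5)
Your proposal is correct and follows essentially the same route as the paper: the authors also derive $G^\sim_1$ by setting $\tilde a^2=a^2=1$ and $\tilde b=b=0$ in the relations~\eqref{eq_theor} of Theorem~\ref{theorem:equivalence-group}, and they likewise invoke the normalization of the superclass~\eqref{eq_fish} to guarantee that this restriction captures the entire equivalence groupoid of~\eqref{eq_NWS}. Your write-up merely spells out the parameter computation ($\theta_t=\delta_1{}^2$, $\varphi_t=0$) that the paper leaves implicit.
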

Therefore, we reduce the problem of classification of reduction operators for class~\eqref{eq_fish} up to the $G^\sim$-equivalence to the similar problem for class~\eqref{eq_NWS}, that contains only one arbitrary element~$c(t)$, up to the $G^\sim_1$-equivalence.

\section{Lie symmetries}\label{section3}

The group classification problem for the class \eqref{eq_NWS} is performed using the standard techni\-que~\smash{\cite{Olver1986,Ovsiannikov1982}}. We
search for operators of the form
\begin{equation}\label{eq_X}
X=\tau(t,x,u)\partial_t+\xi(t,x,u)\partial_x+\eta(t,x,u)\partial_u,
\end{equation}
which generate one-parameter Lie groups of point symmetry transformations for equations from class \eqref{eq_NWS}. Here we require that
\[
X^{(2)}\left \{u_t-u_{xx}+c(t)u^3\right \}=0
\]
modulo equations \eqref{eq_NWS}, where $X^{(2)}$ is the second prolongation of the operator $X$~\cite{Olver1986}. Firstly, we note that, since the class \eqref{eq_NWS} is an evolution equation which is
a polynomial in the pure derivatives of $u$ with respect to $x$, it can be shown that $\tau=\tau(t)$ and $\xi=\xi(t,x)$ \cite{Kingston&Sophocleous1998}. Using these simplifications, after elimination of $u_t$, the above equation takes the form
\[
(2\xi_x-\tau_t)u_{xx}-\eta_{uu}u_x^2+(\xi_{xx}-\xi_t-2\eta_{xu})u_x+(c_t\tau+c\tau_t-c\eta_u)u^3+3c\eta u^2+\eta_t-\eta_{xx} =0.
\]
The coefficients of the derivatives of $u$ with respect to $x$ in this identity provide a system of determining equations that enable us to derive the functional forms of the coefficient functions $\tau$,~$\xi$ and $\eta$ and also that of the arbitrary element $c(t)$. From the coefficients of $u_{xx}$, $u_x^2$ and $u_x$ we find that
\[
\xi=\tfrac 12\tau_t x+\psi(t),\quad \eta=-\left(\tfrac 18\tau_{tt} x^2+\tfrac 12\psi_t x -\varphi(t)\right)u+\zeta(t,x).
\]
Using these results, the term independent of derivatives implies that $\zeta=0$, $\tau=\kappa_1 t+\kappa_2$, $\psi=\kappa_3$, $\varphi=\kappa_4$ and $c(t)$ satisfies the relation
\begin{equation}\label{eq_class}
(\kappa_1t+\kappa_2)c_t+(\kappa_1+2\kappa_4)c=0,
\end{equation}
where $\kappa_i$, $i=1,2,3,4$, are arbitrary constants. Therefore, the infinitesimal generators~\eqref{eq_X} have the general form
\[
X=(\kappa_1t+\kappa_2)\partial_t+\left(\tfrac12\kappa_1x+\kappa_3\right)\partial_x+\kappa_4u\partial_u.
\]
Equation~\eqref{eq_class} is the classifying equation that appears during solving the group classification problems rather frequently, see, for example,~\cite{Popovych&Ivanova2004NVCDCEs,VSL}. If $c(t)$ is an arbitrary function then the classifying equation gives $\kappa_1=\kappa_2=\kappa_4=0$.
The corresponding basis of the maximal Lie symmetry algebra~$A^{\max}$ is the one-dimensional algebra $\langle\partial_x\rangle$.
The extensions of the maximal Lie symmetry algebra are possible if and only if
 $c(t)$ is power or exponential function of $t$, or a constant.
 Therefore, $c(t)$
 either takes the form $c(t)=\varepsilon t^{\rho}$ or $c(t)=\varepsilon {\rm e}^{\pm t}$, where $\rho$ is an arbitrary constant, and $\varepsilon=\pm1$ modulo the equivalence transformations~\eqref{eq_theor}. Substituting these forms of $c(t)$ into equation~\eqref{eq_class} we get the corresponding values of $\kappa_i$, $i=1,2,3,4$, and write down the bases of $A^{\max}$.
 The results are summarised in the following theorem.
\begin{theorem}
The kernel of maximal Lie symmetry algebras of equations from the class~\eqref{eq_NWS} is the one-dimensional algebra $\langle\partial_x\rangle$.
A complete list of $G^\sim$-inequivalent Lie symmetry extensions in class~\eqref{eq_NWS} is exhausted by the cases {\rm 1--3} given in Table~{\rm \ref{TableLieSym1}}.
\end{theorem}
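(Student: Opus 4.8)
The plan is to read the classification directly off the classifying equation~\eqref{eq_class}, since the preceding computation has already reduced the entire determining system to the single relation $(\kappa_1t+\kappa_2)c_t+(\kappa_1+2\kappa_4)c=0$ together with the general form $X=(\kappa_1t+\kappa_2)\partial_t+(\tfrac12\kappa_1x+\kappa_3)\partial_x+\kappa_4u\partial_u$. First I would pin down the kernel. For a coefficient~$c$ of generic form the quantity $c_t/c$ is not proportional to $1/(\kappa_1t+\kappa_2)$, so splitting~\eqref{eq_class} with respect to the functionally independent expressions it contains forces $\kappa_1t+\kappa_2\equiv0$ and $\kappa_1+2\kappa_4=0$; hence $\kappa_1=\kappa_2=\kappa_4=0$ and only $\kappa_3$ remains free. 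This gives the kernel $\langle\partial_x\rangle$ shared by every equation of class~\eqref{eq_NWS}.

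Next I would decide exactly when genuine extensions occur by treating~\eqref{eq_class} as a first-order linear ODE for~$c$ with the $\kappa_i$ as parameters. Separation of variables yields $c_t/c=-(\kappa_1+2\kappa_4)/(\kappa_1t+\kappa_2)$, so a nontrivial solution of the determining system requires $c$ to be (i) proportional to a real power of $\kappa_1t+\kappa_2$ when $\kappa_1\neq0$, (ii) a genuine exponential when $\kappa_1=0$, $\kappa_4\neq0$, or (iii) a constant when $\kappa_1+2\kappa_4=0$ with $c_t=0$. Applying the equivalence group $G^\sim_1$ then normalises these to $c=\varepsilon t^{\rho}$, $c=\varepsilon{\rm e}^{\pm t}$ and $c=\varepsilon$ respectively, with $\varepsilon=\pm1$: a time translation centres the power/exponential, while the scaling law $\tilde c=(\delta_1{}^2\delta_3{}^2)^{-1}c$ (which only rescales $c$ by positive factors) fixes the magnitude to one and the exponential rate to $\pm1$ but leaves the sign~$\varepsilon$ and the exponent~$\rho$ invariant. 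For each normalised form I would substitute back into~\eqref{eq_class}, split with respect to powers of~$t$, and solve the remaining linear relations: the power case ($\rho\neq0$) gives $\kappa_2=0$, $\kappa_4=-\tfrac12\kappa_1(\rho+1)$ and a two-dimensional algebra $\langle\partial_x,\,\text{scaling}\rangle$; the exponential case gives $\kappa_1=0$, $\kappa_4=\mp\tfrac12\kappa_2$ and a two-dimensional algebra; the constant case imposes only $\kappa_4=-\tfrac12\kappa_1$ and yields a three-dimensional algebra. Writing out the associated bases produces rows~1--3 of Table~\ref{TableLieSym1}.

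It remains to argue exhaustiveness and $G^\sim$-inequivalence. Completeness is immediate, since the three functional forms are the only solutions of the separated ODE and no other $c$ admits symmetries beyond the kernel. For inequivalence I would compare the invariants of the cases: the constant coefficient yields $\dim A^{\max}=3$ whereas the power and exponential coefficients yield $\dim A^{\max}=2$, so the constant row is distinct; the power and exponential rows are separated because the time-scaling $\tilde t=\delta_1{}^2t+\delta_0$ maps powers to powers and exponentials to exponentials and never interchanges the two families. I expect the only delicate point to be this final bookkeeping---verifying that $\rho$ is an essential, irremovable parameter in the power case and that its borderline value $\rho=0$ coincides with the constant case (where the symmetry algebra is strictly larger), so that the three rows of Table~\ref{TableLieSym1} are exactly the $G^\sim$-inequivalent extensions.
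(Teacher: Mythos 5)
Your proposal is correct and follows essentially the same route as the paper: the paper likewise treats the classifying equation $(\kappa_1t+\kappa_2)c_t+(\kappa_1+2\kappa_4)c=0$ as the crux, reads off the kernel $\langle\partial_x\rangle$ for arbitrary $c$, observes that extensions occur exactly when $c$ is a power, exponential or constant function, normalises these by the equivalence group to $\varepsilon t^\rho$, $\varepsilon{\rm e}^{\pm t}$, $\varepsilon$, and substitutes back into~\eqref{eq_class} to obtain the bases in Table~\ref{TableLieSym1}. You merely spell out the integration of the classifying ODE and the final inequivalence bookkeeping (dimension count, preservation of the sign of $c$ and of the power/exponential families under $G^\sim_1$) in more detail than the paper's terse presentation.
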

\begin{table}[t!]
\begin{center}
\renewcommand{\arraystretch}{1.3}
\caption{\label{TableLieSym1}The group classification of class~\eqref{eq_NWS} up to the $G^\sim_1$-equivalence.}

\medskip
\begin{tabular}{ccccl}
\hline
no.&$c(t)$&\hfil Basis of $A^{\max}$ \\
\hline
0&$\forall$&$\partial_x$
\\
1&$\varepsilon t^\rho$&$\partial_x,\ 2t\partial_t+x\partial_x-(\rho+1)u\partial_u$
\\
2&$\varepsilon {\rm e}^{\pm t}$&$\partial_x,\ 2\partial_t\mp u\partial_u$
\\
3&$\varepsilon$&$\partial_x,\ \partial_t,\ 2t\partial_t+x\partial_x-u\partial_u$
\\
\hline
\end{tabular}
\\[1.5ex]
\parbox{95mm}{Here $\rho$ is an arbitrary nonzero constant, $\varepsilon=\pm1\bmod G^\sim_1$.}
\end{center}\end{table}

Table~\ref{TableLieSym1} represents also the group classification results for class~\eqref{eq_fish} up to the $G^\sim$-equivalence. We recall that
$a^2(t)=1\bmod G^\sim$, $b(t)=0\bmod G^\sim$ for all the cases of Lie symmetry extension.

For the practical use of the group classification results it is convenient to have also the list of Lie symmetry extensions which is not simplified by equivalence transformations. To get such a~list we use the algorithm described in~\cite{Vaneeva2012}.
Firstly we write down the most general forms of the function~$c(t)$ that correspond to equations from class~\eqref{eq_NWS} with Lie symmetry extensions. These are the cases:
\begin{enumerate}\samepage
\item[1)] $c(t)=\mu (\gamma t+\delta)^\rho$:\quad
$A^{\max}=\left\langle\partial_x,\,2(\gamma t+\delta)\partial_t+\gamma x\partial_x-\gamma(\rho+1)u\partial_u\right\rangle$;

\item[2)] $c(t)=\mu {\rm e}^{\sigma t}$:\quad
$A^{\max}=\langle\partial_x,\,2\partial_t-\sigma u\partial_u\rangle$;

\item[3)] $c(t)=\mu$:\quad
$A^{\max}=\langle\partial_x,\,\partial_t,\,2t\partial_t+x\partial_x-u\partial_u\rangle$.
\end{enumerate}
Here $\mu$, $\gamma,$ $\rho$ and $\sigma$ are arbitrary nonzero constants and $\delta$ is an arbitrary constant.

Using the transformation~\eqref{tr} and the latter classification list it's easy to obtain the classification list for class~\eqref{eq_fish} where arbitrary elements are not gauged by the equivalence transformations.
The results are summarized in Table~\ref{TableLieSym2}.
\begin{table}[t!]
\begin{center}
\renewcommand{\arraystretch}{1.3}
\caption{\label{TableLieSym2}The group classification of class~\eqref{eq_fish}
without usage of the equivalence group.}

\medskip
\begin{tabular}{ccccl}
\hline
no.&$c(t)$&\hfil Basis of $A^{\max}$ \\[1ex]
\hline
0&$\forall$&$\partial_x$
\\
1&$\mu {a^2}{\rm e}^{-2\int b\,{\rm d}t}(\gamma T+\delta)^\rho$&$\partial_x,\ \ \frac2{{a^2}}(\gamma T+\delta)\partial_t+\gamma x\partial_x+\bigl(\frac2{{a^2}}(\gamma T+\delta)b-(\rho+1)\gamma\bigr)u\partial_u$
\\
2&$\mu {a^2}{\rm e}^{\sigma T -2\int b\,{\rm d}t}$&$\partial_x,\ \ \frac2{{a^2}}\partial_t+\bigl(\frac{2b}{{a^2}}-\sigma\bigr)u\partial_u$
\\
3&$\mu {a^2}{\rm e}^{-2\int b\,{\rm d}t}$&$\partial_x,\ \ \frac1{{a^2}}\left(\partial_t+bu\partial_u\right),\ \ \frac2{{a^2}}\partial_t+x\partial_x+\bigl(\frac{2b}{{a^2}}-1\bigr)u\partial_u$
\\
\hline
\end{tabular}
\\[1.5ex]
\parbox{155mm}{Here $a=a(t)$ and $b=b(t)$ are arbitrary nonvanishing smooth functions, $T=\int\!a^2(t)\,{\rm d}t$;
$\mu$, $\sigma$, $\delta$ and~$\rho$ are arbitrary constants with $\mu\sigma\rho\neq0$.}
\end{center}
\end{table}
The latter list reveals the Newell--Whitehead--Segel equations which are of more interest for applications and for which the classical Lie reduction method can be utilized. It is also necessary for the study of nonclassical reduction operators that we perform in the next section to get truly nontrivial ones, i.e., those which are not equivalent to Lie reduction operators.

\section{Nonclassical method}\label{section4}
Given a $(1{+}1)$-dimensional evolution equation with the independent variables $t$ and $x$ and the dependent variable $u$, its reduction operators have the general form~\eqref{eq_X} with $(\tau,\xi)\not =(0,0)$.
The reduction operators~\eqref{eq_X} with nonvanishing
coefficients of $\partial_t$ are regular, and the other its reduction operators are singular~\cite{KunzingerPopovych}; see
also~\cite{Boyko}. The singular case $\tau=0$ was exhaustively investigated for general evolution equation in~\cite{KunzingerPopovych,Zhdanov&Lahno1998}.

Consider the case $\tau\ne0$.
We can assume $\tau=1$ up to the usual equivalence of reduction operators. This equivalence relation means that reduction operators $X$ and $\tilde X$ are equivalent if $\tilde X=\Lambda(t,x,u)X$,
where $\Lambda(t,x,u)$ is a nonvanishing smooth function of its arguments.
Then the nonclassical invariance criterion implies the following determining equations for the coefficients $\xi$ and $\eta$, and also for the arbitrary element $c(t)$:
\begin{gather}\label{EqDetForRedOps}\arraycolsep=0ex
\begin{array}{l}
\xi_{uu}=0,\quad
\eta_{uu}=2(\xi_{xu}-\xi\xi_u),\\[0.5ex]
\eta_{t}-\eta_{xx}+2\xi_{x}\eta+\left(2\xi_{x}-\eta_{u}\right)cu^3+3\eta cu^2+c_tu^3=0,\\[0.5ex]
\xi_{t}-\xi_{xx}+2\xi\xi_{x}-2\xi_{u}\eta+2\eta_{xu}-3\xi_{u}c u^3=0.
\end{array}
\end{gather}
Integration of the first two equations of system~\eqref{EqDetForRedOps} gives us the following
expressions for the coefficients $\xi$ and $\eta$
\begin{gather*}
\xi=fu+g,\quad
\eta=-\tfrac13f^2u^3+(f_x-fg)u^2+hu+k,
\end{gather*}
where $f=f(t,x)$, $g=g(t,x)$, $h=h(t,x)$ and $k=k(t,x)$.
We further substitute the derived forms of $\xi$ and $\eta$ into the rest two equations of system~\eqref{EqDetForRedOps} and split the resulting equations with respect to variable $u$.
This leads to a system of nine determining equations involving operator coefficients $f$, $g$, $h$, and $k$ as well as the arbitrary element $c(t)$ of class~\eqref{EqDetForRedOps}. One of the equations is $f(9c-2f^2)=0$. The further consideration splits into two cases $f\neq0$ and $f=0$.

{\bf I.} If $f\neq0$, then $9c-2f^2=0$, which means $f_x=0$ and $f$ is a function of $t$ only. Then the rest of the determining equations imply $g=k=0$, $h=\alpha$, $f=\beta {\rm e}^{2\alpha t}$, and $c=\frac29\beta^2{\rm e}^{4\alpha t}$, where $\alpha$ and $\beta\neq0$ are constants.
Therefore, the equation
\begin{equation}\label{eq_red_op_1}
u_t=u_{xx}-\frac29\beta^2{\rm e}^{4\alpha t}u^3
\end{equation}
admits the nonclassical reduction operator
\[
X_1=\partial_t+\beta {\rm e}^{2\alpha t}u\partial_x+\left(\alpha-\frac13\beta^2{\rm e}^{4\alpha t}u^2\right)u\partial_u.
\]
The constants $\alpha$ and $\beta$ can be additionally gauged by equivalence transformations, see case 1 of Table~~\ref{TableNonclassicalSym}.

{\bf II.} If $f=0$, then $k=0$, $h=-g_x-\frac12\frac{\dot c}c$ and the rest of the determining equations are
\begin{gather*}\label{deteqs2}
g_t+2gg_x-3g_{xx}=0,\\
g_{tx}+2g_x^2-g_{xxx}+\dfrac{\dot c}c g_x+\dfrac12\frac{\rm d}{{\rm d}t}\left(\dfrac{\dot c}c\right)=0.
\end{gather*}
This system of two partial differential equations for the function $g(t,x)$, one of which involves arbitrary element~$c(t)$ of the class.
The investigation of compatibility of this system implies that~$c(t)$ can be only a power, exponential or constant function, otherwise the system is inconsistent.
Truly non-Lie reduction operators arise only if~$c(t)$ is either an exponential function or a constant.
The list of the equations admitting nontrivial nonclassical reduction operators with $\xi_u=0$ is the following:
\begin{gather}\nonumber 
u_t=u_{xx}-\mu u^3\colon
\\
X_2=\partial_t-\frac3{x}\partial_x-\frac3{x^2}u\partial_u.\nonumber\\[0.5ex]
\label{eq_red_op_3}
u_t=u_{xx}-\mu {\rm e}^{\sigma t} u^3\colon
\\
X_3=\partial_t-\frac32\sqrt{\sigma}\tanh\left(\frac{\sqrt{\sigma}}2x\right)\partial_x
-\frac3{4}\sigma\left(\tanh^2\left(\frac{\sqrt{\sigma}}2x\right)-\frac13\right)u\partial_u,\ \ \sigma>0;\nonumber\\
X_4=\partial_t-\frac32\sqrt{\sigma}\coth\left(\frac{\sqrt{\sigma}}2x\right)\partial_x
-\frac3{4}\sigma\left(\coth^2\left(\frac{\sqrt{\sigma}}2x\right)-\frac13\right)u\partial_u,\ \ \sigma>0;\nonumber\\
X_5=\partial_t+\frac32\sqrt{-\sigma}\tan\left(\frac{\sqrt{-\sigma}}2x\right)\partial_x
+\frac3{4}\sigma\left(\tan^2\left(\frac{\sqrt{-\sigma}}2x\right)+\frac13\right)u\partial_u,\ \ \sigma<0.\nonumber
\end{gather}
Here $\mu$ and $\sigma$ are arbitrary nonzero constants. Both of them can be gauged by the equivalence transformations to be equal to $1$ or $-1$ depending on their signs, namely $\mu\mapsto\sign\mu$, and $\sigma\mapsto\sign\sigma$.

We summarize the results on classification of nonclassical reduction operators of equations~\eqref{eq_NWS} up to the $G^\sim_1$-equivalence in Table~\ref{TableNonclassicalSym}. In all the cases of Table~\ref{TableNonclassicalSym} $\varepsilon=\pm1$. The same table represents the results on classification of nonclassical reduction operators of equations~\eqref{eq_fish} up to the $G^\sim$-equivalence ($a(t)=1\bmod G^\sim$ and $b(t)=0\bmod G^\sim$ in this case).
\begin{table}[t!]
\begin{center}
\renewcommand{\arraystretch}{1.7}
\caption{\label{TableNonclassicalSym}Nonclassical
reduction operators of equations~\eqref{eq_NWS}.}

\medskip
\begin{tabular}{ccl}
\hline
no.&$c(t)$&\hfil Reduction operators \\
\hline
1&${\rm e}^{\pm t}$&$\partial_t+\frac{3\sqrt{2}}2 {\rm e}^{\pm\frac12 t}u\partial_x-\frac12\left(3{\rm e}^{\pm t}u^2\mp\frac12\right)u\partial_u$\\
&$\varepsilon {\rm e}^{t}$&$\partial_t-\frac32\tanh\left(\frac{1}2x\right)\partial_x
-\frac3{4}\left(\tanh^2\left(\frac12x\right)-\frac13\right)u\partial_u$\\
&&$\partial_t-\frac32\coth\left(\frac12x\right)\partial_x
-\frac3{4}\left(\coth^2\left(\frac12x\right)-\frac13\right)u\partial_u$
\\
&$\varepsilon {\rm e}^{-t}$&$\partial_t+\frac32\tan\left(\frac12x\right)\partial_x
-\frac3{4}\left(\tan^2\left(\frac12x\right)+\frac13\right)u\partial_u$\\
\hline
2&$\varepsilon$&$\partial_t-\frac3{x}\partial_x-\frac3{x^2}u\partial_u$\\
\hline
\end{tabular}
\end{center}
\end{table}

Theorem~\ref{theorem2} implies that equations~\eqref{eq_red_op_1} and~\eqref{eq_red_op_3} are reducible to constant coefficient Newell--Whitehead--Segel equations~\eqref{eq_fish} by equivalence transformations from the group $G^\sim$.
Indeed, the transformation
\[
\tilde t=t, \quad \tilde x=x, \quad \tilde u={\rm e}^{\frac\sigma2t}u
\]
maps equation~\eqref{eq_red_op_3} to the equation
\[
\tilde u_{\tilde t}=\tilde u_{\tilde x\tilde x}+\frac\sigma 2 \tilde u-\mu{\tilde u}^3.
\]
The latter observation means that direct reduction of equations~\eqref{eq_red_op_1} and~\eqref{eq_red_op_3} using the nonclassical symmetry operators is not the optimal way for finding their exact solutions.
More convenient way is the reduction of their constant coefficient counterparts (or immediate usage of exact solutions of constant coefficient equations, if such solutions are known) and then derivation of exact solutions by the equivalence method, see the related discussion in~\cite{Popovych&Vaneeva2010}. The next section is devoted to construction of exact solutions for equations from class~\eqref{eq_fish}
using the equivalence transformations.

\section{Exact solutions}\label{section5}

Theorem~\ref{theorem2} implies that equations of the form
\begin{gather}\label{eq_fish1}
u_t=a^2(t)u_{xx}+\left(\lambda a^2(t)+\frac{\dot a(t)}{a(t)}-\frac12\frac{\dot c(t)}{c(t)}\right)u-c(t)u^3,
\end{gather}
where $a(t)$ and $c(t)$ are nonvanishing smooth functions and $\lambda$ is a nonzero constant,
are similar to the constant-coefficient equation
 \begin{equation}\label{cubic}
u_t=u_{xx}+\varepsilon u- u^3
\end{equation}
 with $\varepsilon=\sign\lambda$. The latter equation is well studied by various techniques and a~number of its exact solutions are known, see, e.g., \cite[p.~177]{Polyanin&Zaitsev2012} and \cite{VPS2009}, and references therein. The similarity is established by the transformation
\begin{equation}\label{trr}
\tilde t=|\lambda|\int a^2(t){\rm d}t,\quad \tilde x=\sqrt{|\lambda|}\,x,\quad \tilde u=\frac1{a(t)}\sqrt{\frac{c(t)}{|\lambda|}}\,u.
\end{equation} for the case $\lambda\neq0$ and by the transformation
\begin{equation}\label{trr1}
\tilde t=\int a^2(t){\rm d}t,\quad \tilde x=x, \quad \tilde u=\frac{\sqrt{c(t)}}{a(t)}\,u,
\end{equation}
otherwise.
There are obvious restrictions for this transformations to connect two real valued exact solutions for the physical case $t>0$.
It works fine for all functions $c(t)>0$ when $t>0$, for example for power coefficient $c(t)$ that is used most frequently in applications.

We illustrate the possibility of generation of solutions for equations~\eqref{eq_fish1} by the following example.
The transformation~\eqref{trr}
maps the known traveling wave solution
\begin{equation*}
u=\frac12-\frac12\tanh\left(\frac{\sqrt{2}}{4}
x-\frac{3}{4}t\right)
\end{equation*}
 of the constant-coefficient equation~\eqref{cubic} with $\varepsilon=1$~\cite{Wang1988} to new exact solution{\samepage
\[
u=\frac12{ a(t)}\sqrt{\dfrac{\lambda}{c(t)}}\left(1-\tanh\biggl(\frac{\sqrt{2\lambda}}4x-\frac34\lambda\int a^2(t){\rm d}t\biggr)\right)
\]
of variable-coefficient equation~\eqref{eq_fish1} with $\lambda>0$ and $c(t)>0$ for $t>0$.}

A number of other exact solutions of the equation~\eqref{cubic} are collected in~\cite{NikitinBarannyk2004,Polyanin&Zaitsev2012, VPS2009}.
We consider the exact solutions of the equation~\eqref{cubic} collected in~\cite{VPS2009} and apply to them either transformation~\eqref{trr} in the case $\lambda\neq0$ or transformation~\eqref{trr1}, otherwise. As a result we obtain wide families of exact solutions of variable coefficient Newell--Whitehead--Segel equations~\eqref{eq_fish1}.

Hereafter $T=|\lambda|\int a^2(t){\rm d}t;$ the functions ${\rm cn}(z,k)$, ${\rm sn}(z,k)$, and ${\rm ds}(z,k)$
are Jacobian elliptic functions \cite{WhittakerWatson}.

$\lambda>0\colon$
\begin{gather*}
u= { a(t)}\sqrt{\dfrac{\lambda}{c(t)}}\frac{C_1\exp\left(
\frac{\sqrt{2\lambda}}2x\right)-C_1'\exp\left(-\frac{\sqrt{2\lambda}}2x\right)}
{C_2\exp\left(-\frac32T\right)+C_1\exp\left(\frac{\sqrt{2\lambda}}2x\right)+C_1'\exp\left(-\frac{\sqrt{2\lambda}}2x\right)},
\\
u= { a(t)}\sqrt{\dfrac{\lambda}{c(t)}}C_1\exp\left(\tfrac32T\right)\sinh\left(\tfrac{\sqrt{2\lambda}}2x\right)
{\rm ds}\left(C_1\exp\left(\tfrac32T\right)\cosh\left(\tfrac{\sqrt{2\lambda}}2x\right)+C_2,\tfrac{\sqrt2}2\right),
\\
u= { a(t)}\sqrt{\dfrac{\lambda}{c(t)}}C_1\exp\left(\tfrac32T\right)\cosh\left(\tfrac{\sqrt{2\lambda}}2x\right)
{\rm ds}\left(C_1\exp\left(\tfrac32T\right)\sinh\left(\tfrac{\sqrt{2\lambda}}2x\right)+C_2,\tfrac{\sqrt2}2\right),
\\
u= { a(t)}\sqrt{\dfrac{\lambda}{c(t)}}\frac{C_1}2\exp\left(\tfrac32T\right)\sinh\left(\tfrac{\sqrt{2\lambda}}2x\right)\!\!
\frac{1+{\rm cn}\left(C_1\exp\left(\tfrac32T\right)\cosh\left(\tfrac{\sqrt{2\lambda}}2x\right)+C_2,\tfrac{\sqrt2}2\right)}{{\rm sn}
\left(C_1\exp\left(\tfrac32T\right)\cosh\left(\frac{\sqrt{2\lambda}}2x\right)+C_2,\tfrac{\sqrt2}2\right)},
\\
u= { a(t)}\sqrt{\dfrac{\lambda}{c(t)}}\frac{C_1}2\exp\left(\tfrac32T\right)\cosh\left(\tfrac{\sqrt{2\lambda}}2x\right)\!\!
\frac{1+{\rm cn}\left(C_1\exp\left(\tfrac32T\right)\sinh\left(\frac{\sqrt{2\lambda}}2x\right)+C_2,\tfrac{\sqrt2}2\right)}{{\rm sn}
\left(C_1\exp\left(\tfrac32T\right)\sinh\left(\frac{\sqrt{2\lambda}}2x\right)+C_2,\tfrac{\sqrt2}2\right)}.
\end{gather*}

$\lambda<0\colon$
\begin{gather*}
u= { a(t)}\sqrt{\dfrac{-\lambda}{c(t)}}\frac{\sin\left(\tfrac{\sqrt{-2\lambda}}2x\right)}{C_2\exp\left(\frac32T\right)+\cos\left(\tfrac{\sqrt{-2\lambda}}2x\right)},
\\
u= { a(t)}\sqrt{\dfrac{-\lambda}{c(t)}}C_1\exp\left(-\tfrac32T\right)\sin\left(\tfrac{\sqrt{-2\lambda}}2x\right){\rm ds}
\left(C_1\exp\left(-\tfrac32T\right)\cos\left(\tfrac{\sqrt{-2\lambda}}2x\right)+C_2,\tfrac{\sqrt2}2\right),
\\
u= { a(t)}\sqrt{\dfrac{-\lambda}{c(t)}}\frac{C_1}2\exp\left(-\tfrac32T\right)\cos\left(\tfrac{\sqrt{-2\lambda}}2x\right)\!\!
\frac{1+{\rm cn}\left(C_1\exp\left(-\tfrac32T\right)\sin\left(\tfrac{\sqrt{-2\lambda}}2x\right)+C_2,\tfrac{\sqrt2}2\right)}{{\rm sn}
\left(C_1\exp\left(-\tfrac32T\right)\sin\left(\tfrac{\sqrt{-2\lambda}}2x\right)+C_2,\tfrac{\sqrt2}2\right)}.
\end{gather*}

$\lambda=0\colon$
\begin{gather*}
u= 2\sqrt2\,x\,\frac{a(t)}{\sqrt{c(t)}}\,{\rm ds}\left(x^2+6\int a^2(t){\rm d}t,\tfrac{\sqrt2}2\right), \\
u= \sqrt2\,x\,\frac{a(t)}{\sqrt{c(t)}}\,
\dfrac{1+{\rm cn}\left(x^2+6\int a^2(t){\rm d}t,\tfrac{\sqrt2}2\right)}
{{\rm sn}\left(x^2+6\int a^2(t){\rm d}t,\tfrac{\sqrt2}2\right)},\\
u= \frac{a(t)}{\sqrt{c(t)}}\dfrac{2\sqrt2\, x}{{x}^2+6\int a^2(t){\rm d}t},\quad
u= \frac{a(t)}{\sqrt{c(t)}}\dfrac{\sqrt2}{x}, \\
u= \sqrt2\,\frac{a(t)}{\sqrt{c(t)}}\,{\rm ds}\left(x,\tfrac{\sqrt2}2\right), \quad
u= \dfrac{\sqrt2}2\,\frac{a(t)}{\sqrt{c(t)}}\,\dfrac{1+{\rm cn}\left(x,\tfrac{\sqrt2}2\right)}
{{\rm sn}\left(x,\tfrac{\sqrt2}2\right)}.
\end{gather*}

As equation~\eqref{eq_NWS} admits the equivalence transformation of the alternating sign $u\mapsto-u$ all the above presented solutions can also have the forms with the opposite sign.

\section{Conclusion}
We have performed an extended group analysis of variable coefficient Newell--Whitehead--Se\-gel~\eqref{eq_fish}. Along with the classical group classification problem a more difficult classification problem of nonclassical reduction operators ($Q$-conditional symmetries) has been solved. We have also derived the equivalence groupoid and the criterion of reducibility of variable coefficient Newell--Whitehead--Segel equations to constant coefficient equations from the same class. The latter results form a~basis for the usage of the equivalence method for finding exact solutions. As a~result wide families of exact solutions for variable coefficient Newell--Whitehead--Segel~\eqref{eq_fish} have been constructed.

\subsection*{Acknowledgments} The authors are grateful to Roman Popovych for useful remarks. OV acknowledges the financial support provided by the NAS of Ukraine under
the project 0118U003803. 
The first three authors are  partially   supported by the NAS of Ukraine under the project 0116U003059.

\end{document}